\documentclass[aps,twocolumn,floatfix]{revtex4-1}
\usepackage{amsmath,amssymb,amsthm,mathtools}
\usepackage{graphicx}
\usepackage{booktabs}
\usepackage[dvipsnames]{xcolor}
\usepackage{enumitem}
\usepackage[colorlinks=true,linkcolor=NavyBlue,citecolor=BrickRed,urlcolor=NavyBlue]{hyperref}
\usepackage{caption}
\newcommand{\R}{\mathbb{R}}

\newcommand{\hb}{\hbar}
\newcommand{\Dc}{\Delta_{c}}
\newcommand{\dx}{\Dc x}
\newcommand{\dpp}{\Dc p}
\newcommand{\eps}{\varepsilon}

\newcommand{\norm}[1]{\left\lVert #1\right\rVert}

\newcommand{\ket}[1]{\lvert #1\rangle}

\newcommand{\ev}[1]{\langle #1\rangle}
\newcommand{\dd}{\,\mathrm{d}}

\newcommand{\Fou}{\mathcal{F}}
\newcommand{\Pp}{\mathcal{P}}      % particle character
\newcommand{\Ww}{\mathcal{W}}      % wave character
\DeclareMathOperator{\supp}{supp}

\theoremstyle{plain}

\newtheorem{proposition}{Proposition}
\newtheorem{lemma}{Lemma}

\theoremstyle{definition}
\newtheorem{definition}{Definition}

\newtheorem{conjecture}{Conjecture}

\begin{document}

\title{Wave--particle duality as an uncertainty relation for the average confidence width}

\author{Shengjun Wu}
\email{sjwu@nju.edu.cn}
\affiliation{National Laboratory of Solid State Microstructures and School of Physics,  Collaborative Innovation Center of Advanced Microstructures, Nanjing University, Nanjing 210093, China.}

\date{\today}

\begin{abstract}
We introduce the \emph{average confidence width}
$\Delta_{a} x=\int_0^1\dx(\theta_x)\dd\theta_x$: the confidence width $\dx(\theta_x)$---the
smallest position interval carrying a fraction $\theta_x$ of the probability---averaged
over all levels. It is the first moment of the decreasing rearrangement of $|\psi|^2$, an
$L^1$ mean-absolute-deviation measure of localization, so the product $\Delta_{a} x\,\Delta_{a} p$
is dilation invariant and obeys $\Delta_{a} x\,\Delta_{a} p\ge c\,\hb$. Reading $1/\Delta_{a} x$ as a
\emph{particle character} and $1/\Delta_{a} p$ as a \emph{wave character}, this lower bound on
combined spread is \emph{identically} an upper bound on combined particle-and-wave
character: uncertainty and wave--particle duality are two faces of one inequality. A
mean--entropy argument with the Bia\l ynicki-Birula--Mycielski relation gives the rigorous
$c\ge\pi/e$, while the achievable constant $c^\ast$ is set by the ground state of the
Fourier-invariant operator $|x|+|p|$, $c^\ast\le E_0^2\approx1.217$. Hence
$\pi/e\le c^\ast\le E_0^2<4/\pi$: the optimal state is \emph{sub-Gaussian}, so the
Gaussian---optimal for the Heisenberg and entropic relations---is not the duality optimum.
\end{abstract}

\maketitle

\section{Introduction}\label{sec:intro}

Wave-particle duality states that quantum objects can exhibit both wave-like and particle-like properties.
However, a quantum object cannot be fully
particle-like and fully wave-like simultaneously. Intuitively, a particle is something localized in
space; a wave is something localized in momentum---a sharp wavevector, a pure tone.
The more sharply a state is concentrated in position, the more it behaves as a
pointlike particle, and the more it must spread in momentum, and conversely.
Heisenberg's relation $\sigma_x\sigma_p\ge\hb/2$~\cite{Heisenberg1927,Kennard1927} is
the familiar quantitative residue of this tension; but the variance is only one,
tail-sensitive, way to measure ``how localized'' a state is, and it says nothing
about \emph{where} the probability actually sits. Beginning with Wootters and
Zurek~\cite{WoottersZurek1979} and Greenberger and Yasin~\cite{GreenbergerYasin1988},
a parallel and now highly developed line of work has quantified duality \emph{directly},
through interferometric predictability and fringe visibility~\cite{Englert1996,
JaegerShimonyVaidman1995,Durr2001}. This paper develops a duality measure of a
third kind---geometric and device-independent---directly on phase space.

A measure-theoretic notion of localization was introduced recently through the
\emph{confidence width} $\dx(\theta_x)$: the size of the smallest position region
carrying a fraction $\theta_x$ of the probability~\cite{LinConfidence2026}. It
quantifies localization operationally---the length you must cover to be
$\theta_x$-confident of finding the particle---but it depends on the arbitrary level
$\theta_x$. Averaging over all levels removes this dependence and yields the object of
this paper, the \emph{average confidence width} $\Delta_{a} x=\int_0^1\dx(\theta_x)\dd\theta_x$
(and likewise $\Delta_{a} p$), the area under the concentration curve. It has a clean
analytic identity: $\Delta_{a} x$ is the first moment of the decreasing rearrangement of
$|\psi|^2$, equivalently $2\ev{|x|}$ for symmetric unimodal states---an $L^1$
mean-absolute-deviation measure of localization rather than an $L^2$ one.

The two readings---uncertainty and duality---are dual inequalities for the same
quantity. Reading $1/\Delta_{a} x$ as a measure of \emph{particle character} and $1/\Delta_{a} p$ as
a measure of \emph{wave character}, an uncertainty relation $\Delta_{a} x\,\Delta_{a} p\ge c\hb$ is
exactly a duality bound $(1/\Delta_{a} x)(1/\Delta_{a} p)\le 1/(c\hb)$: a lower bound on combined
spread is an upper bound on combined particle-and-wave character. Uncertainty and
duality are one statement read from opposite ends---a viewpoint that, in the
interferometric setting, underlies the equivalence of duality relations and entropic
uncertainty relations~\cite{ColesKaniewskiWehner2014}. The average confidence width
also carries an information-theoretic meaning: it is an \emph{encoding resource}, the
effective description length needed to localize the state in a given basis, tied to the
differential entropy through $\Delta_{a} x\ge e^{h(x)-1}$.

We establish the scale-free relation $\Delta_{a} x\,\Delta_{a} p\ge c\hb$ and determine its
constant. A rigorous entropic argument gives $c\ge\pi/e\approx1.156$. The
\emph{achievable} constant $c^\ast$ is governed by the ground state of the Fourier-invariant
operator $|x|+|p|$, giving $c^\ast\le E_0^2\approx1.217$, attained by a
\emph{sub-Gaussian} state, so that $\pi/e\le c^\ast\le E_0^2<4/\pi\approx1.273$. The
last strict inequality is the headline: \emph{the Gaussian, which saturates both the
Heisenberg and the entropic relations, is not the minimizer here}; the optimal
trade-off between particle and wave character is reached by a sharper-peaked,
lighter-shouldered state. We conjecture $c^\ast=E_0^2$ and isolate the gap
$[\pi/e,\,E_0^2]$ as the main open problem. To keep the main text short and motivated,
all proofs and numerical methods are deferred to appendices.

\textbf{Background.} We use two structural facts established for the confidence
width in \cite{LinConfidence2026}.
For a pure state $\ket\psi\in L^2(\R)$, $\norm\psi=1$, and $\theta_x\in[0,1]$,
\begin{equation}\label{eq:defconf}
  \dx(\theta_x)=\inf\Big\{\mu(X):\ \int_X|\langle x|\psi\rangle|^2\dd x\ge\theta_x\Big\},
\end{equation}
with $X\subseteq\R$ Lebesgue measurable and $\mu$ Lebesgue measure; $\dpp(\theta_p)$ is
defined through $\langle p|\psi\rangle=\tfrac1{\sqrt{2\pi\hb}}\int e^{-ipx/\hb}\langle
x|\psi\rangle\dd x$. The two facts are: (1) \emph{(joint localizability)} if
$\theta_x+\theta_p\le1$, then for every $\eps>0$ some state has
$\dx(\theta_x)<\eps$ and $\dpp(\theta_p)<\eps$ simultaneously; and (2) the
\emph{confidence bound}: for $\theta_x+\theta_p>1$,
\begin{equation}\label{eq:thm2}
  \dx(\theta_x)\,\dpp(\theta_p)\ \ge\ 2\pi\hb\Big(\sqrt{\theta_x\theta_p}
   -\sqrt{(1-\theta_x)(1-\theta_p)}\Big)^2,
\end{equation}
a strengthening of the Donoho--Stark principle~\cite{DonohoStark1989}. Throughout,
$f=|\psi|^2$ and $g=|\tilde\psi|^2$ are the position and momentum densities.

\begin{figure}[t]
\includegraphics[width=0.99\linewidth]{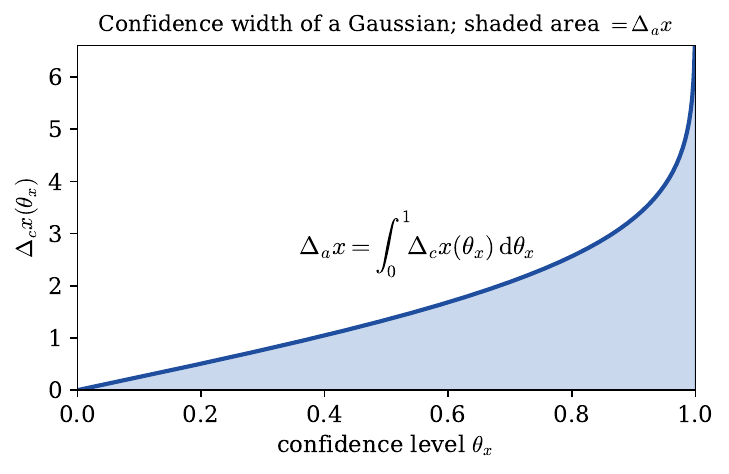}
\caption{The confidence width $\dx(\theta_x)$ of a Gaussian as a function of the
confidence level. The average confidence width $\Delta_{a} x$ is the shaded area under the
curve. The integrable divergence as $\theta_x\to1$ reflects the Gaussian tails.}
\label{fig:conf}
\end{figure}

\section{The average confidence width}\label{sec:acw}

\begin{definition}[Average confidence width (ACW)]\label{def:spread}
The average confidence widths of position and momentum are
\begin{equation}\label{eq:defspread}
\begin{split}
  \Delta_{a} x=\int_0^1\dx(\theta_x)\dd\theta_x, \\
  \Delta_{a} p=\int_0^1\dpp(\theta_p)\dd\theta_p .
\end{split}
\end{equation}
For an observable with discrete spectrum its ACW is defined similarly if the integral is replaced by a sum.
\end{definition}

Geometrically, $\theta\mapsto\dx(\theta)$ is a (quantile-like) concentration curve and
$\Delta_{a} x$ is the area beneath it (Fig.~\ref{fig:conf}), exactly as the ordinary mean is
the area under a quantile function. The width admits a closed form. Let $f^\ast$ denote
the decreasing rearrangement of $f=|\psi|^2$ (the unique non-increasing function equimeasurable with $f$) and $\Theta(s)=\int_0^s f^\ast(u)\dd u$.

\begin{proposition}[Rearrangement identity and the MAD form]\label{prop:rearr}
For any pure state,
\begin{equation}\label{eq:rearr}
\begin{split}
  \Delta_{a} x&=\int_0^\infty s\,f^\ast(s)\dd s=\int_0^\infty\bigl(1-\Theta(s)\bigr)\dd s \\
       &=2\int_\R|x|\,f^\sharp(x)\dd x,
\end{split}
\end{equation}
where $f^\sharp(x)=f^\ast(2|x|)$ is the symmetric decreasing rearrangement. Hence
$\Delta_{a} x$ depends on $\psi$ only through $f^\ast$ (in particular it is translation and
rearrangement invariant), and
\begin{equation}\label{eq:madineq}
  \Delta_{a} x\ =\ 2\int_\R|x|\,f^\sharp(x)\dd x\ \le\ 2\ev{|x|},
\end{equation}
with equality iff $f$ is symmetric and non-increasing about its center; for such
\emph{symmetric unimodal} states $\Delta_{a} x=2\ev{|x|}$, twice the mean absolute deviation (MAD).
\end{proposition}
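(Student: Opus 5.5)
The plan is to identify $\dx(\theta)$ explicitly in terms of $f^\ast$, so that everything reduces to one-dimensional calculus on the rearrangement.

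\textbf{Step 1: the confidence width as an inverse.} By the bathtub principle, for any measurable $X$ with $\mu(X)=s$ we have $\int_X f\le\int_0^s f^\ast=\Theta(s)$. Equality is attained by a superlevel set $\{f>t\}$, suitably padded with part of $\{f=t\}$, which is possible because the measure is non-atomic. Hence
\begin{equation*}
\dx(\theta)=\inf\{s:\Theta(s)\ge\theta\}=\Theta^{-1}(\theta),
\end{equation*}
the generalized inverse of the continuous non-decreasing function $\Theta:[0,\infty)\to[0,1)$. This is the step needing most care. The issues are that $\Theta$ may be constant on an interval when $f^\ast=0$ there (compact support), and that $\theta=1$ may not be reached; both affect only a null set of $\theta$ or the endpoint $\theta=1$.

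\textbf{Step 2: the first two forms.} Since $\dx$ is the generalized inverse of $\Theta$, we have $\dx(\theta)>s\iff\Theta(s)<\theta$. By the layer-cake formula,
\begin{equation*}
\Delta_{a} x=\int_0^1\!\int_0^\infty \mathbf 1[\dx(\theta)>s]\dd s\dd\theta=\int_0^\infty\!\int_0^1\mathbf 1[\theta>\Theta(s)]\dd\theta\dd s=\int_0^\infty(1-\Theta(s))\dd s .
\end{equation*}
Tonelli applies because the integrand is nonnegative. Next, $1-\Theta(s)=\int_s^\infty f^\ast(u)\dd u$, and exchanging the order again gives $\int_0^\infty u f^\ast(u)\dd u$. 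All identities hold in $[0,\infty]$, so divergent cases are covered too.

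\textbf{Step 3: the symmetric form.} With $f^\sharp(x)=f^\ast(2|x|)$, substitute $u=2|x|$ on each half-line:
\begin{equation*}
2\int_\R|x|f^\ast(2|x|)\dd x=4\int_0^\infty x f^\ast(2x)\dd x=\int_0^\infty uf^\ast(u)\dd u .
\end{equation*}
The invariance claims follow immediately, since $f^\ast$ is unchanged by translations and rearrangements.

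\textbf{Step 4: the MAD inequality.} I would write $\ev{|x|}=\int_0^\infty \int_{\{|x|>r\}}f\dd x\dd r$. For each $r$, the set $\{|x|\le r\}$ has measure $2r$, so $\int_{|x|\le r}f\le\Theta(2r)=\int_{|x|\le r}f^\sharp$. Therefore
\begin{equation*}
\int_{|x|>r}f\ge\int_{|x|>r}f^\sharp,
\end{equation*}
and integrating over $r$ gives $\int|x|f^\sharp\le\ev{|x|}$.

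For the equality case, assume $\ev{|x|}<\infty$. Equality forces $\int_{|x|\le r}f=\Theta(2r)$ for a.e. $r$, hence for all $r$ by continuity. So every interval $[-r,r]$ is an optimal set. Differentiating in $r$, $f(r)+f(-r)=2f^\ast(2r)$ a.e. Optimality of each $[-r,r]$ also requires $f\ge f^\ast(2r)$ inside and $f\le f^\ast(2r)$ outside; letting $r$ vary, this gives $f(\pm r)=f^\ast(2r)$ a.e. Thus $f=f^\sharp$ a.e.: $f$ is symmetric and non-increasing about $0$.

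Here the center is taken as $0$. Since $\Delta_{a} x$ is translation invariant, the natural reading is with $\ev{|x|}$ measured about the center. Conversely, if $f$ is symmetric unimodal then $f=f^\sharp$ and equality holds, giving $\Delta_{a} x=2\ev{|x|}$.

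I expect the main obstacle to be the rigorous handling of Step 1 and of the equality case, specifically the ties on level sets and a.e. issues. The remaining steps are routine Fubini manipulations.
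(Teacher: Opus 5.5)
Your proposal is correct and follows essentially the same route as the paper: the bathtub principle gives $\dx=\Theta^{-1}$, Fubini/layer-cake gives the first two forms, the substitution $u=2|x|$ gives the $f^\sharp$ form, and a bathtub comparison gives the MAD inequality (your level-by-level Step 4 simply unpacks the paper's claim that $f^\sharp$ minimizes $\int|x|h$ over rearrangements $h$ of $f$). Your version is somewhat more careful than the paper's, since it handles the generalized inverse, ties on level sets, and the need for $\ev{|x|}<\infty$ in the equality case.
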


The proof (bathtub principle plus Fubini) is in Appendix~\ref{app:props}. Two
consequences fix the form of any uncertainty relation. Under the dilation
$\psi(x)\mapsto\lambda^{1/2}\psi(\lambda x)$ one has $\Delta_{a} x\mapsto\lambda^{-1}\Delta_{a} x$ and
$\Delta_{a} p\mapsto\lambda\,\Delta_{a} p$, so the product $\Delta_{a} x\,\Delta_{a} p$ is \emph{dilation
invariant} and any relation must take the scale-free form
\begin{equation}\label{eq:form}
  \Delta_{a} x\,\Delta_{a} p\ \ge\ c\,\hb,\qquad c>0 .
\end{equation}
For a Gaussian, $\Delta_{a} x\,\Delta_{a} p=4/\pi\,\hb\approx1.273\,\hb$ (Appendix~\ref{app:props}).
The remaining task is to find the largest
admissible $c$.

\section{Wave--particle duality}\label{sec:duality}

The average confidence width measures localization: a small $\Delta_{a} x$ means the
probability is concentrated in a short interval, i.e.\ the state is
\emph{particle-like}; a small $\Delta_{a} p$ means a concentrated momentum distribution, i.e.\
\emph{wave-like}. It is therefore natural to define the
\begin{equation}\label{eq:PW}
\begin{split}
  \text{particle character}\quad \Pp[\psi]&=\frac1{\Delta_{a} x}, \\
  \text{wave character}\quad \Ww[\psi]&=\frac1{\Delta_{a} p}.
\end{split}
\end{equation}
Both are positive and dilation covariant ($\Pp\mapsto\lambda\Pp$,
$\Ww\mapsto\lambda^{-1}\Ww$), so their product is dilation invariant---as it must be for
any scale-free notion of how particle-and-wave-like a state is.

\emph{Duality as the upper-bound face of uncertainty.} The relation
\eqref{eq:form} proved below is, term for term, the same statement as
\begin{equation}\label{eq:duality}
  \boxed{\ \Pp[\psi]\,\Ww[\psi]\ \le\ \frac{1}{c\,\hb}\ }
\end{equation}
No state can be simultaneously strongly particle-like and strongly wave-like; the
\emph{product} of the two characters is capped. This is the precise sense in which
\emph{uncertainty and duality are two faces of one inequality}: uncertainty read as a
\emph{lower} bound on a product of spreads, duality as an \emph{upper} bound on a
product of localizations. The minimizer of the uncertainty relation is exactly the
maximizer of the duality product---the most balanced particle/wave compromise---and we
will see it is sub-Gaussian rather than Gaussian (Fig.~\ref{fig:duality}). Unlike the
interferometric predictability and visibility, which are normalized to $[0,1]$, the
characters $\Pp,\Ww$ are unbounded metric quantities (a near-pointlike state has
$\Pp\to\infty$); it is their dilation-invariant \emph{product} that carries the duality.

\begin{figure}[t]
  \includegraphics[width=0.99\linewidth]{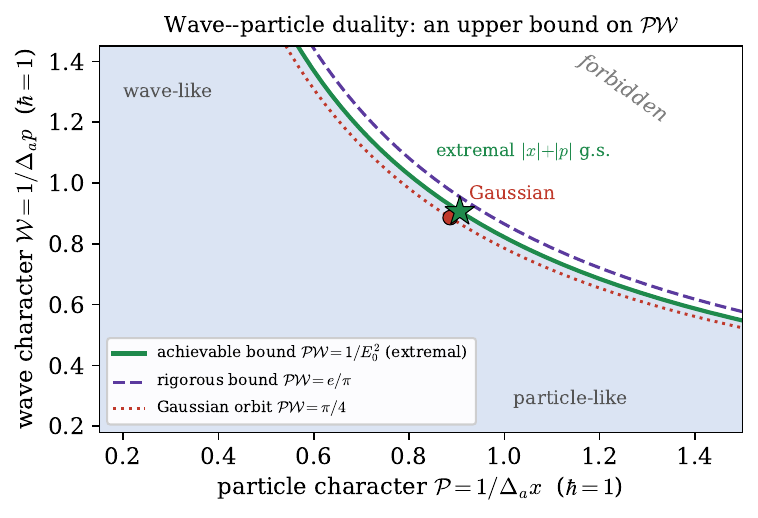}
  \caption{Wave--particle duality in the plane of particle character
  $\Pp=1/\Delta_{a} x$ and wave character $\Ww=1/\Delta_{a} p$ (units $\hb=1$). The product
  $\Pp\Ww$ is bounded above: no state can exceed the rigorous entropic value $e/\pi$
  (dashed), while the sub-Gaussian extremal states attain the conjectured maximum
  $1/E_0^2$ (green). The Gaussian (red) lies strictly \emph{inside} the allowed
  region---it is not the most balanced particle/wave state. Dilation moves a state along a
  hyperbola $\Pp\Ww=\text{const}$.}
  \label{fig:duality}
\end{figure}

\emph{Finite-dimensional intuition.} The trade-off is cleanest in finite
dimension. For a two-level system measured in two mutually unbiased bases the average
confidence width reduces to $W=2-p_{\max}$, where $p_{\max}$ is the larger outcome
probability, so $W$ is an affine function of the which-path predictability in one basis
and of the fringe visibility in the complementary one. A state perfectly predictable in
one mutually unbiased basis $A$ (a ``particle,'' $p_{\max}=1$, $W_A=1$) is maximally flat in the other basis $B$ (a ``wave,'' $W_B$
maximal), and balanced states sit between. The single inequality that is an
uncertainty relation for $(x,p)$ is, for a qubit, literally the predictability--visibility
duality of which-path interferometry.

\emph{Encoding-resource reading.} There is a complementary information-theoretic
picture. $\Delta_{a} x$ is the effective length of position-space support---a
description-length-type resource needed to localize, i.e.\ encode, the state in the
position basis---and through $\Delta_{a} x\ge e^{h(x)-1}$ (\S\ref{sec:relation}) it is bounded
below by the exponential of the differential entropy $h(x)$. The product bound then says
the \emph{total} encoding resource across the two complementary bases cannot be made
small at once: \eqref{eq:duality} is a budget constraint on how compactly a state can be
described as particle and as wave simultaneously. This makes the duality inequality an
information-theoretic, not merely geometric, statement (\S\ref{sec:lit}).

\section{The uncertainty relation and the optimal constant}\label{sec:relation}

We give two rigorous lower bounds, then identify the achievable constant.

Because the two integrals in \eqref{eq:defspread} are over independent variables,
\begin{equation}\label{eq:productdouble}
  \Delta_{a} x\,\Delta_{a} p=\int_0^1\!\!\int_0^1\dx(\theta_x)\,\dpp(\theta_p)\,\dd\theta_x\dd\theta_p .
\end{equation}
On the region $\theta_x+\theta_p>1$ we bound the integrand by \eqref{eq:thm2};
on $\theta_x+\theta_p\le1$ we use only $\dx\,\dpp\ge0$ (Theorem 1 in \cite{LinConfidence2026}
forbids any positive universal bound there). Hence

\begin{proposition}\label{cor:selfcontained}
$\displaystyle \Delta_{a} x\,\Delta_{a} p\ \ge\ 2\pi\hb\, I$, where
\begin{equation}
\begin{split}
I&= \iint_{\theta_x+\theta_p>1}\Big(\sqrt{\theta_x\theta_p}-\sqrt{(1-\theta_x)(1-\theta_p)}\Big)^2\dd\theta_x\dd\theta_p  \\
    &=\frac14-\frac{\pi^2}{64}\approx0.09579 .
\end{split}
\end{equation}
Equivalently $\Delta_{a} x\,\Delta_{a} p\ge\big(\tfrac{\pi}{2}-\tfrac{\pi^3}{32}\big)\hb\approx0.6019\,\hb$.
\end{proposition}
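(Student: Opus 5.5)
The plan is to combine the product representation \eqref{eq:productdouble} with the confidence bound \eqref{eq:thm2} region by region, and then evaluate the resulting elementary double integral in closed form.

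\textbf{Step 1: the inequality.} Since $\dx(\theta_x)\ge0$ and $\dpp(\theta_p)\ge0$ are measurable in $\theta$ (both are monotone non-decreasing), Tonelli's theorem justifies \eqref{eq:productdouble} even when one factor is infinite. I split the unit square into the two regions $\{\theta_x+\theta_p\le1\}$ and $\{\theta_x+\theta_p>1\}$.
\begin{itemize}
\item On the first region I discard the integrand, using only that it is non-negative.
\item On the second region I insert \eqref{eq:thm2} pointwise.
\end{itemize}
This gives $\Delta_{a} x\,\Delta_{a} p\ge 2\pi\hb\,I$ with $I$ exactly as stated. The only subtlety is that \eqref{eq:thm2} is meaningful as an inequality there: the bracket is non-negative on $\theta_x+\theta_p>1$, since $\theta_x\theta_p>(1-\theta_x)(1-\theta_p)$ is equivalent to $\theta_x+\theta_p>1$.

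\textbf{Step 2: evaluating $I$.} I expand the square as
\[
\theta_x\theta_p+(1-\theta_x)(1-\theta_p)-2\sqrt{\theta_x(1-\theta_x)}\sqrt{\theta_p(1-\theta_p)}
\]
and treat the three terms separately over the triangle $T=\{\theta_x+\theta_p>1\}$.

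\emph{First term.} Directly,
\[
\int_0^1\theta_x\int_{1-\theta_x}^1\theta_p\,\dd\theta_p\,\dd\theta_x=\int_0^1\theta_x\,\frac{2\theta_x-\theta_x^2}{2}\,\dd\theta_x=\tfrac{5}{24}.
\]

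\emph{Second term.} The reflection $\theta\mapsto1-\theta$ in both variables maps $T$ onto the complementary triangle $T^c$ (up to a null set). It turns the second term into $\iint_{T^c}\theta_x\theta_p$, which equals $\tfrac14-\tfrac5{24}=\tfrac1{24}$. Hence the first two terms contribute $\tfrac14$.

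\emph{Cross term.} Its integrand is invariant under the same reflection, so its integral over $T$ is half its integral over the whole square. The integral over the square factorizes as $\bigl(\int_0^1\sqrt{t(1-t)}\,\dd t\bigr)^2=(\pi/8)^2$, since $\int_0^1\sqrt{t(1-t)}\,\dd t=B(3/2,3/2)=\pi/8$. The cross term therefore contributes $-2\cdot\tfrac12\cdot\tfrac{\pi^2}{64}=-\tfrac{\pi^2}{64}$.

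Altogether $I=\tfrac14-\tfrac{\pi^2}{64}$, and $2\pi I=\tfrac{\pi}{2}-\tfrac{\pi^3}{32}\approx0.6019$.

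There is no real obstacle. The step most worth care is the symmetry bookkeeping in Step 2: checking that the reflection maps $T$ onto its complement and leaves the cross-term integrand invariant. This avoids any awkward direct integration of $\sqrt{\theta_p(1-\theta_p)}$ over a variable lower limit.
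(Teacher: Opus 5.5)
Your proof is correct and follows the paper's own argument. The paper also inserts the confidence bound on $\theta_x+\theta_p>1$ and discards the complementary triangle. It then evaluates $I$ by expanding the square, uses the involution $(\theta_x,\theta_p)\mapsto(1-\theta_x,1-\theta_p)$ to get $\tfrac{5}{24}+\tfrac{1}{24}=\tfrac14$, and halves $(\pi/8)^2$ for the cross term. Your extra remarks on monotonicity and Tonelli, and on the sign of the bracket, are small rigor points the paper leaves implicit.
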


The proof is in Appendix~\ref{app:proofselfcontained}, this bound uses nothing beyond the results in \cite{LinConfidence2026}; since the latter is not
sharp, the constant $0.602$ is far from optimal. The next bound is much tighter.

\emph{The entropic bound.} Let $h(x)=-\int_\R f\ln f\dd x$ be the differential
entropy and recall the Bia\l ynicki-Birula--Mycielski (BBM)
relation~\cite{BBM1975,Beckner1975,Hirschman1957} $h(x)+h(p)\ge\ln(\pi e\hb)$,
saturated by Gaussians. The average confidence width is tied to the entropy through a
one-dimensional extremal fact (the exponential maximizes entropy at fixed mean on a
half-line), giving the following; the proof is in Appendix~\ref{app:relation}.

\begin{proposition}[Entropic bound]\label{thm:entropic}
For every pure state,
\begin{equation}
  \Delta_{a} x\ \ge\ e^{\,h(x)-1},
\end{equation}
hence
\begin{equation}\label{eq:entropicspread}
  \Delta_{a} x\,\Delta_{a} p\ \ge\ e^{\,h(x)+h(p)-2}\ \ge\ \frac{\pi}{e}\,\hb\ \approx\ 1.156\,\hb .
\end{equation}
\end{proposition}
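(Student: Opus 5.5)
We reduce everything to the decreasing rearrangement $f^\ast$ via Proposition~\ref{prop:rearr}. The first identity there gives $\Delta_{a} x=\int_0^\infty s\,f^\ast(s)\dd s$, so $\Delta_{a} x$ is exactly the mean of the probability density $f^\ast$ on the half-line $[0,\infty)$; note $\int_0^\infty f^\ast=1$ because rearrangement preserves the distribution function. Moreover, equimeasurability also preserves the entropy, $-\int_0^\infty f^\ast\ln f^\ast\dd s=-\int_\R f\ln f\dd x=h(x)$, since the integral of any function $\Phi(f)$ (here $\Phi(t)=-t\ln t$, splitting into positive and negative parts to handle integrability) depends only on the level-set measures $\mu\{f>t\}$.

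Next we invoke the maximum-entropy fact: among densities on $[0,\infty)$ with mean $m>0$, the exponential $m^{-1}e^{-s/m}$ maximizes differential entropy, with maximal value $1+\ln m$. We prove it by the Gibbs inequality: for any density $q$ on $[0,\infty)$ with mean $m$, $0\le D(q\|\mathrm{Exp}_m)=-h(q)+\int q(s)\,(\ln m+s/m)\dd s=-h(q)+\ln m+1$. Applying this with $q=f^\ast$ and $m=\Delta_{a} x$ gives $h(x)\le 1+\ln\Delta_{a} x$, i.e.\ $\Delta_{a} x\ge e^{h(x)-1}$. The edge cases are trivial: if $\Delta_{a} x=\infty$ there is nothing to prove. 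If $h(x)=-\infty$ the bound is vacuous. The Gibbs argument is valid whenever the mean is finite, because then $h(f^\ast)<+\infty$ automatically.

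The same argument applied to $g=|\tilde\psi|^2$ gives $\Delta_{a} p\ge e^{h(p)-1}$. Multiplying yields $\Delta_{a} x\,\Delta_{a} p\ge e^{h(x)+h(p)-2}$, and the BBM inequality $h(x)+h(p)\ge\ln(\pi e\hb)$ then gives $e^{\ln(\pi e\hb)-2}=\pi\hb/e$. If either entropy is $-\infty$, the BBM inequality forces the other to be $+\infty$, so the product of exponentials must be read carefully. In that case we argue directly: a finite $\Delta_{a} p$ forces $h(p)<\infty$ by the bound just proved, and then BBM forces $h(x)>-\infty$. So whenever both widths are finite, both entropies are finite and the chain is legitimate.

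The main obstacle is the rigorous justification that entropy is rearrangement invariant without assuming integrability of $f\ln f$. We handle this via the layer-cake representation separately on $\{f\le 1\}$ and $\{f>1\}$, where $-t\ln t$ has a fixed sign, so that monotone convergence applies to each part.
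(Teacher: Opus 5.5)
Your argument is the same as the paper's: you treat $f^\ast$ as a density on $[0,\infty)$ whose mean is $\Delta_{a} x$ (by the rearrangement identity) and whose entropy is $h(x)$. You then use nonnegativity of the relative entropy against $m^{-1}e^{-s/m}$ to get $h(x)\le 1+\ln\Delta_{a} x$, repeat this for momentum, multiply, and insert BBM. Your treatment of infinite widths and entropies is extra care that the paper omits, since it tacitly assumes finite entropies; just be aware that ruling out $h(x)=-\infty$ with $h(p)<\infty$ uses BBM in an extended-real form that goes slightly beyond its usually stated hypotheses.
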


(The single-marginal bound $\Delta_{a} x\ge e^{h(x)-1}$ is stated in fixed units; the additive
$\ln(\text{length})$ ambiguity of the differential entropy under rescaling cancels in the
dilation-invariant product $\Delta_{a} x\,\Delta_{a} p$, exactly as in the BBM relation.) The bound is not
attained: equality in $\Delta_{a} x\ge e^{h-1}$ forces $f^\ast$ exponential, while equality in BBM
forces $f$ Gaussian, and the two are incompatible, so $c^\ast>\pi/e$ strictly
(Appendix~\ref{app:relation}).

\emph{The achievable constant $c^\ast$.} The MAD inequality \eqref{eq:madineq} gives
$\Delta_{a} x\,\Delta_{a} p\le4\ev{|x|}\ev{|p|}$ for \emph{every} state, with equality when both
densities are symmetric unimodal. Minimizing the right-hand side---a quantity governed by
a single Fourier-invariant operator---therefore bounds $c^\ast$ from above, and its
minimizer turns out to be symmetric unimodal, so the bound is attained.

\begin{proposition}[Reduction to $|x|+|p|$]\label{prop:hamiltonian}
Fix a reference length $\ell$ and momentum $p_0=\hb/\ell$, and let
\begin{equation}\label{eq:Odef}
  \hat O\;=\;\frac{|x|}{\ell}+\frac{|p|}{p_0}
        \;=\;\frac{\sqrt{x^2}}{\ell}+\frac{\ell\,\sqrt{p^2}}{\hb}
\end{equation}
be the dimensionless self-adjoint operator built from $|x|=\sqrt{x^2}$ and
$|p|=\sqrt{p^2}$. In the canonical pair $X=x/\ell$,
$P=\ell p/\hb$ ($[X,P]=i$) it reads $\hat O=|X|+|P|$, manifestly invariant under the
Fourier transform $X\leftrightarrow P$. Its ground-state energy
$E_0=\inf_{\norm\psi=1}\ev{\hat O}$ is dilation invariant, hence \emph{independent of
$\ell$}. Then $\inf_\psi 4\ev{|x|}\ev{|p|}=E_0^2\,\hb$, and consequently
$c^\ast\le E_0^2$. Setting $\ell=1$, $\hb=1$ leads to $\hat O=|x|+|p|$ for simplicity.
\end{proposition}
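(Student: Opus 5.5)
The plan has three steps. First, check the canonical form of $\hat O$ and its dilation invariance. Then reduce $\inf 4\ev{|x|}\ev{|p|}$ to $E_0^2$ by AM--GM together with an optimization over dilations. Finally, combine this with the MAD inequality \eqref{eq:madineq}.

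\emph{Canonical form and $\ell$-independence.} With $X=x/\ell$ and $P=\ell p/\hb$ we have $[X,P]=[x,p]/\hb=i$, and $|x|/\ell+|p|/p_0=|X|+|P|$. The Fourier transform $\Fou$ implements $X\mapsto P$, $P\mapsto -X$, and since $|{-X}|=|X|$, the operator $\hat O$ commutes with $\Fou$ up to relabeling. Changing $\ell$ to $\ell'$ corresponds to the unitary dilation $U_\lambda\psi(x)=\lambda^{1/2}\psi(\lambda x)$ with $\lambda=\ell/\ell'$. Conjugation by a unitary preserves the infimum of the spectrum, so $E_0$ does not depend on $\ell$.

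\emph{Key identity.} For a normalized $\psi$, write $a=\ev{|x|}/\ell$ and $b=\ell\ev{|p|}/\hb$; both are finite whenever $\ev{\hat O}<\infty$. By AM--GM, $(a+b)^2\ge 4ab=4\ev{|x|}\ev{|p|}/\hb$. This gives only the inequality in the wrong direction for our purpose, so the dilation freedom must be used. Under $U_\lambda$, $a\mapsto a/\lambda$ and $b\mapsto\lambda b$, while the product $ab$ is unchanged. Choosing $\lambda=\sqrt{a/b}$ balances the two terms, and then $\ev{\hat O}_{U_\lambda\psi}=2\sqrt{ab}$.

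For the lower bound on $4\ev{|x|}\ev{|p|}$: for every state, $4ab=\bigl(\min_\lambda\ev{\hat O}_{U_\lambda\psi}\bigr)^2\ge E_0^2$. For the matching upper bound on the infimum: take states $\psi_n$ with $\ev{\hat O}_{\psi_n}\to E_0$. Then $4a_nb_n\le(a_n+b_n)^2\to E_0^2$. Together these give $\inf_\psi 4\ev{|x|}\ev{|p|}=E_0^2\hb$ exactly.

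\emph{Consequence for $c^\ast$.} By \eqref{eq:madineq} applied in both position and momentum, $\Delta_{a} x\,\Delta_{a} p\le4\ev{|x|}\ev{|p|}$ for every state. Hence
\[
c^\ast\hb=\inf_\psi\Delta_{a} x\,\Delta_{a} p\le\inf_\psi 4\ev{|x|}\ev{|p|}=E_0^2\hb .
\]

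\emph{Main obstacle.} The step needing care is the functional-analytic setup. One must check that $\hat O$ is self-adjoint as a form sum of two nonnegative operators, and that the form domain (states with $\ev{|x|},\ev{|p|}<\infty$) is the right class, so that $E_0$ is the infimum over it and the infimum of $\ev{|x|}\ev{|p|}$ may be restricted to it. States outside this domain give an infinite product, so they do not affect the infimum. If $\ev{|x|}=0$ or $\ev{|p|}=0$ the balancing dilation degenerates, but this cannot happen for a nonzero $L^2$ function: $\ev{|x|}=0$ would force $f$ to be a point mass. No attainment of $E_0$ is needed for the inequality. The remark that the minimizer is symmetric unimodal, which would give equality, is a separate claim that I would not attempt here.
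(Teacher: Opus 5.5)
Your proof is correct and follows the paper's argument. The MAD inequality \eqref{eq:madineq} gives $\Delta_{a} x\,\Delta_{a} p\le4\ev{|x|}\ev{|p|}$ for every state, and optimizing over dilations gives $\inf_\lambda\ev{\hat O}_\lambda=2\sqrt{\ev{|x|}\ev{|p|}/\hb}$, whose infimum over $\psi$ is $E_0$; your split into a per-state lower bound plus a minimizing-sequence upper bound just spells out the interchange of infima that the paper does in one line.
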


Because $\hat O$ commutes with the Fourier transform, $\psi_0$ is self-dual
($\tilde\psi_0=\psi_0$), even, and unimodal, so the reduction is exact at $\psi_0$
(Appendix~\ref{app:relation}). Diagonalizing $|x|+|p|$ on a grid
(Appendix~\ref{app:num}) gives $E_0\approx1.1032$ and
\begin{equation}\label{eq:chain}
  \underbrace{\tfrac{\pi}{e}}_{\approx1.156}\hb
  \ \le\ c^\ast\hb\ \le\
  \underbrace{E_0^2}_{\approx1.217}\hb
  \ <\
  \underbrace{\tfrac{4}{\pi}}_{\approx1.273}\hb .
\end{equation}
The strict last inequality is the headline qualitative fact: \emph{the Gaussian is not
the minimizer}, in sharp contrast with the Kennard and entropic relations, both
saturated by Gaussians. The optimizer is sub-Gaussian---sharper-peaked with lighter
shoulders (Fig.~\ref{fig:profile})---which lowers the rearranged mean width below the
Gaussian value.

\emph{Is the upper bound tight?} Strong numerical evidence says yes. Within the
one-parameter sub-Gaussian family $\psi\propto e^{-|x|^n/2}$ the product is minimized $\approx1.225$ (when
$n\approx1.35$--$1.4$), just above $E_0^2$ and well below the Gaussian
value at $n=2$ (Fig.~\ref{fig:vsn}); an unconstrained optimization over symmetric states
in $8$ even Hermite functions (permitting non-unimodal profiles) converges to $1.222$ at
a state close to $\psi_0$ (Appendix~\ref{app:num}). We therefore
conjecture:

\begin{conjecture}\label{conj:opt}
The optimal constant in \eqref{eq:form} is $c^\ast=E_0^2$, attained uniquely (up to
dilation, translation, and phase) by the ground state of $|x|+|p|$; numerically
$c^\ast\approx1.217$.
\end{conjecture}

The obstruction to a closed-form matching lower bound is that, while
$\Delta_{a} x=2\!\int|x|f^\sharp dx$ and $\Delta_{a} p=2\!\int|p|g^\sharp dp$ are controlled by the
\emph{individual} rearrangements, $f^\sharp$ and $g^\sharp$ are not a Fourier pair, so
$E_0$ cannot be inserted directly. Closing the gap $[\pi/e,\,E_0^2]$---e.g.\ via a
Fourier rearrangement inequality controlling $\Delta_{a} p$ through $f^\sharp$---is the main open
problem.

\begin{figure}[t]
  \centering
  \begin{minipage}{0.99\linewidth}\centering
    \includegraphics[width=\linewidth]{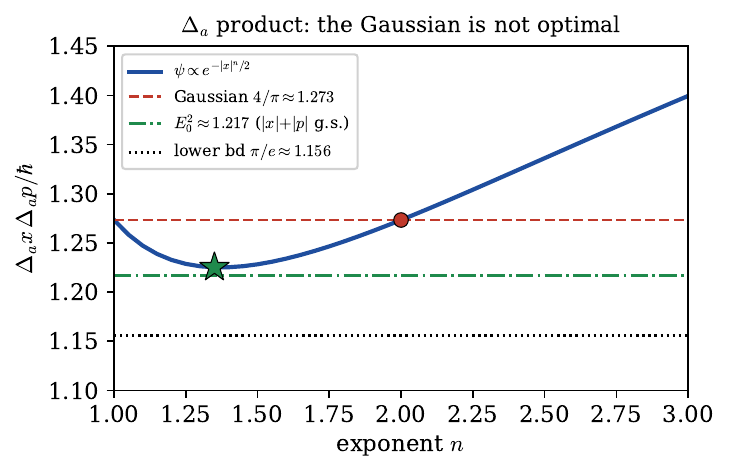}
    \caption{Product for $\psi\propto e^{-|x|^n/2}$: the minimum (star) sits at
    $n\approx1.4$ near $E_0^2$ (green), strictly below the Gaussian $4/\pi$ at $n=2$
    (red). Dotted: the rigorous bound $\pi/e$.}
    \label{fig:vsn}
  \end{minipage}\hfill
  \begin{minipage}{0.99\linewidth}\centering
    \includegraphics[width=\linewidth]{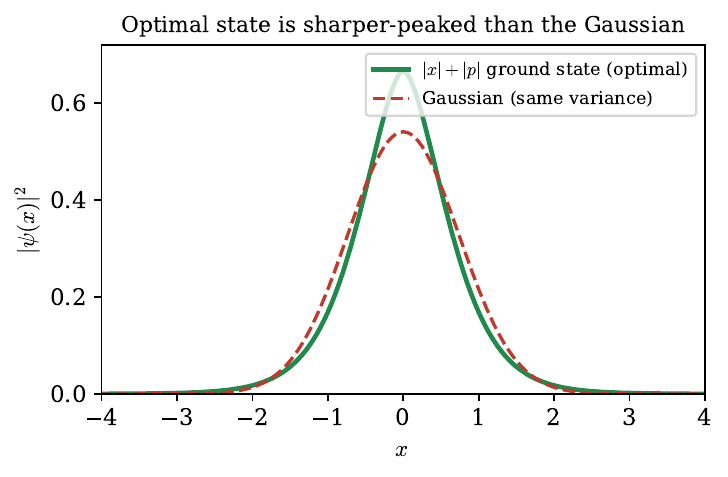}
    \caption{Position density of the optimal ($|x|+|p|$ ground) state versus a
    variance-matched Gaussian---sharper-peaked with lighter shoulders, a sub-Gaussian
    profile.}
    \label{fig:profile}
  \end{minipage}
\end{figure}

\section{Connections and applications}\label{sec:lit}

\emph{Uncertainty relations.} The average confidence width sits at the
intersection of several traditions and is complementary to each. The
\emph{variance} ($L^2$) relations of Heisenberg, Kennard and Robertson%
~\cite{Heisenberg1927,Kennard1927,Robertson1929,Dodonov2025} weight tails
quadratically and are saturated by Gaussians; $\Delta_{a}$ is instead a first-moment ($L^1$)
functional of the rearranged density, tail-robust and, as shown above, \emph{not}
Gaussian-optimal. The \emph{entropic} relations~\cite{Hirschman1957,Beckner1975,BBM1975,
MaassenUffink1988,Wu2009,ColesRMP2017,Wang2019,Huang2021} we use rather than compete with: $\Delta_{a} x\ge e^{h(x)-1}$
lower-bounds the width by an exponential of entropy, but $\Delta_{a}$ carries independent
geometric content (the two bounds saturate on different states). The \emph{measure /
support} principles of Donoho--Stark and Landau--Pollak%
~\cite{DonohoStark1989,LandauPollak1961,FollandSitaram1997,Benedicks1985,AmreinBerthier1977}
are the home of the confidence bound \eqref{eq:thm2}; the map
$\theta\mapsto\dx(\theta)=\Theta^{-1}(\theta)$ is the inverse of the L\'evy
concentration function and $\Delta_{a} x$ its first moment. Finally $\Delta_{a}$ is, precisely, a
Lorenz/\emph{majorization} object: $\Theta$ is the Lorenz curve of $f$ and $\Delta_{a} x$ is a
Schur-concave functional of the density, decreasing under majorization, which places
the relation in the universal-uncertainty framework of Partovi, Friedland--Gheorghiu--Gour
and Puchała--Rudnicki--Życzkowski~\cite{Partovi2011,FriedlandGheorghiuGour2013,
PuchalaRudnickiZyczkowski2013,MarshallOlkinArnold2011}.

\emph{Wave--particle duality.} Quantitative duality began with the
information-theoretic analysis of Wootters and Zurek~\cite{WoottersZurek1979} and the
predictability--visibility inequality $P^2+V^2\le1$ of Greenberger and
Yasin~\cite{GreenbergerYasin1988}, generalized to the distinguishability--visibility
relation $D^2+V^2\le1$ by Jaeger--Shimony--Vaidman and
Englert~\cite{JaegerShimonyVaidman1995,Englert1996} and to multipath interferometers and
coherence measures by D\"urr and by Bagan \emph{et
al.}~\cite{Durr2001,BaganBergou2016}. A unifying insight, foreshadowed by D\"urr and
Rempe~\cite{DurrRempe2000}, is that duality relations are equivalent to entropic uncertainty
relations~\cite{ColesKaniewskiWehner2014,SpegelLexne2024}. Our construction is a sibling of this program in a new regime: the
duality \eqref{eq:duality} is phrased through average confidence widths
(localization lengths) rather than interferometric visibility, lives on continuous phase
space, and is \emph{device-independent}---no interferometer, no which-path detector, no
choice of basis beyond the canonical Fourier pair. Because $\Delta_{a} x\ge e^{h-1}$ ties our
measure to entropy, the relation realizes ``duality from entropy'' with a concrete metric
quantity; and unlike the visibility-based relations (Gaussian/uniform-saturated) its
optimizer is genuinely sub-Gaussian.

\emph{Information theory and applications.} The encoding-resource reading is exact.
The differential entropy $h(x)$ is a description length---the resolution-independent cost
of specifying $x$~\cite{CoverThomas2006}---and $e^{h(x)}$ is the effective support
length of $f$ (the typical-set size), so $\Delta_{a} x\ge e^{h(x)-1}$ reads ``the effective
length is at least an exponential of the entropy,'' a covering-number lower bound in the
spirit of Kolmogorov--Tikhomirov $\eps$-entropy~\cite{KolmogorovTikhomirov1959}. The
product bound is then a no-free-lunch statement: the joint position--momentum description
cost has a floor. In finite dimension this becomes operationally sharp---the discrete
average confidence width is exactly Massey's \emph{guesswork}, the minimal expected
number of guesses to identify a measurement outcome~\cite{Massey1994,Arikan1996}---so the
relation bounds the joint guessing cost in two complementary bases.
Natural applications include lossy quantization of wavefunctions, where $\Delta_{a} x/\delta$
sets the grid size needed at resolution $\delta$; robust $L^1$ metrology, where
$\Delta_{a} x=2\ev{|x|}$ is a tail-insensitive precision measure; and the resource counting of
mutually-unbiased-basis tomography.

\section{Conclusion}\label{sec:conc}

Reading $1/\Delta_{a} x$ and $1/\Delta_{a} p$ as particle and wave characters turns the average
confidence width relation $\Delta_{a} x\,\Delta_{a} p\ge c\hb$ into a wave--particle duality bound: the
two are the lower- and upper-bound faces of one inequality. The clean rearrangement
identity $\Delta_{a} x=\int_0^\infty s f^\ast ds=2\!\int|x|f^\sharp dx$ makes $\Delta_{a}$ a mean-absolute-deviation
functional and the product dilation invariant; a mean--entropy argument with BBM gives the
rigorous $\Delta_{a} x\,\Delta_{a} p\ge(\pi/e)\hb$, and the sharp constant is governed by the ground
state of $|x|+|p|$: $c^\ast\le E_0^2\approx1.217$, conjecturally with equality, and
strictly below the Gaussian value $4/\pi$. The extremal state is sub-Gaussian---a
genuinely different optimizer from the Gaussian of the variance and entropic relations.
The outstanding problem is to close the gap $[\pi/e,E_0^2]$ and prove
Conjecture~\ref{conj:opt}. Further work shall extend the construction to an arbitrary
pair of complementary observables, in finite and infinite dimensions, where the duality
reading becomes exact and elementary.

\appendix

\section{The average confidence width: proofs}\label{app:props}

\begin{proof}[Proof of Proposition~\ref{prop:rearr}]
By the bathtub principle~\cite{LiebLoss2001}, among all sets of measure $s$ the one maximizing
the captured probability is a super-level set $\{f>t\}$, with maximum $\Theta(s)=\int_0^s f^\ast(u)\dd u$.
Since $\Theta$ is continuous and strictly increasing on $\supp f^\ast$ with $\Theta(0)=0$,
$\Theta(\infty)=1$, the minimal measure capturing $\theta$ is its inverse,
$\dx(\theta)=\Theta^{-1}(\theta)$. Therefore
$\Delta_{a} x=\int_0^1\Theta^{-1}(\theta)\dd\theta=\int_0^\infty(1-\Theta(s))\dd s$, and
$\int_0^\infty(1-\Theta(s))\dd s=\int_0^\infty\!\!\int_s^\infty f^\ast(u)\dd u\,\dd s
=\int_0^\infty u f^\ast(u)\dd u$ by Fubini, which is the first equality of
\eqref{eq:rearr}. The change of variables $s=2|x|$ gives the $f^\sharp$ form. Finally, by
the bathtub principle $f^\sharp$ minimizes $\int|x|h\dd x$ over all rearrangements $h$ of
$f$; since $f$ is such a rearrangement, $\int|x|f^\sharp dx\le\int|x|f dx=\ev{|x|}$, which is
\eqref{eq:madineq}, with equality iff $f=f^\sharp$, i.e.\ $f$ symmetric non-increasing.
\end{proof}

\emph{Scaling.} Under $\psi(x)\mapsto\lambda^{1/2}\psi(\lambda x)$ one has
$f^\ast(s)\mapsto\lambda f^\ast(\lambda s)$, so $\Delta_{a} x\mapsto\lambda^{-1}\Delta_{a} x$, while the
induced momentum scaling gives $\Delta_{a} p\mapsto\lambda\Delta_{a} p$; hence $\Delta_{a} x\,\Delta_{a} p$ is dilation
invariant and \eqref{eq:form} is the only possible form.

\emph{Worked examples.} For a Gaussian $\psi\propto e^{-x^2/4\sigma^2}$,
$f^\ast(s)=\tfrac1{\sqrt{2\pi}\sigma}e^{-s^2/8\sigma^2}$ and
$\Delta_{a} x=\tfrac{4}{\sqrt{2\pi}}\sigma=2\sqrt{2/\pi}\,\sigma\approx1.596\,\sigma=2\ev{|x|}$;
with $\sigma_p=\hb/2\sigma$, $\Delta_{a} x\,\Delta_{a} p=\tfrac4\pi\hb\approx1.273\,\hb$. For a box
$f=\tfrac1{2L}\mathbf 1_{[-L,L]}$, $\Delta_{a} x=L$ but the $\mathrm{sinc}^2$ momentum density
has a $\sim p^{-2}$ tail, so $\Delta_{a} p=\infty$; likewise the two-sided exponential
$f\propto e^{-|x|/\lambda}$ has $\Delta_{a} x=2\lambda$ but a Lorentzian momentum density and
$\Delta_{a} p=\infty$. Thus finiteness of $\Delta_{a} x\,\Delta_{a} p$ already forces both densities to decay
faster than $|t|^{-2}$.

\section{Proof of Proposition~\ref{cor:selfcontained}}\label{app:proofselfcontained}

\begin{proof}
Expand the square as $\theta_x\theta_p+(1-\theta_x)(1-\theta_p)-2\sqrt{\theta_x(1-\theta_x)}\sqrt{\theta_p(1-\theta_p)}$.
Using the involution $(\theta_x,\theta_p)\mapsto(1-\theta_x,1-\theta_p)$, which swaps
$\{\theta_x+\theta_p>1\}$ and $\{\theta_x+\theta_p<1\}$, one finds
$\iint_{>1}[\theta_x\theta_p+(1-\theta_x)(1-\theta_p)]=\tfrac{5}{24}+\tfrac{1}{24}=\tfrac14$,
while the symmetric weight $w(t)=\sqrt{t(1-t)}$ gives
$\iint_{>1}w(\theta_x)w(\theta_p) d \theta_x d \theta_p=\tfrac12\big(\int_0^1 w dt \big)^2=\tfrac12(\pi/8)^2=\pi^2/128$.
Thus $I=\tfrac14-2\cdot\tfrac{\pi^2}{128}=\tfrac14-\tfrac{\pi^2}{64}$.
\end{proof}

\section{The uncertainty relation: proofs}\label{app:relation}

\begin{lemma}[Mean--entropy inequality]\label{lem:meanentropy}
If $\rho$ is a probability density on $[0,\infty)$ with mean $M=\int_0^\infty s\rho\dd s<\infty$
and entropy $H(\rho)=-\int\rho\ln\rho ds $, then $H(\rho)\le1+\ln M$, with equality iff
$\rho(s)=M^{-1}e^{-s/M}$.
\end{lemma}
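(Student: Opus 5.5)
The plan is to use Gibbs' inequality (nonnegativity of relative entropy) against the exponential density $\phi_M(s)=M^{-1}e^{-s/M}$ on $[0,\infty)$, which has the same mean $M$ as $\rho$. First I dispose of degenerate cases. If $M=0$, then $\rho$ would be concentrated at $0$, which is impossible for a density, so $M>0$. If $H(\rho)=-\infty$, the inequality is trivial. So I assume $M>0$ and $H(\rho)>-\infty$.

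The key computation is
\[
0\le D(\rho\Vert\phi_M)=\int_0^\infty\rho\ln\frac{\rho}{\phi_M}\,ds=-H(\rho)-\int_0^\infty\rho\ln\phi_M\,ds .
\]
The cross term is linear in $s$:
\[
-\int_0^\infty\rho\ln\phi_M\,ds=\int_0^\infty\rho\Big(\ln M+\frac{s}{M}\Big)ds=\ln M+1 .
\]
This integral is finite precisely because $M<\infty$. Rearranging gives $H(\rho)\le 1+\ln M$. A direct check shows $H(\phi_M)=1+\ln M$, so equality holds at the exponential.

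The only delicate step is making the relative-entropy manipulation rigorous, because $H(\rho)$ is a priori only an extended real number. I would avoid splitting integrals of possibly indefinite sign. Instead I would apply the pointwise inequality $\ln u\le u-1$ with $u=\phi_M/\rho$ on $\{\rho>0\}$:
\[
\rho\ln\frac{\phi_M}{\rho}\le \phi_M-\rho .
\]
The right-hand side is integrable with integral at most $0$, so $\int\rho\ln(\phi_M/\rho)\,ds\le 0$ holds in the extended sense. Because $\rho\ln\phi_M$ is integrable (this uses only the finite mean), adding it back is legitimate and yields the inequality.

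For the equality case, $\ln u\le u-1$ is an equality only at $u=1$. So equality forces $\phi_M=\rho$ almost everywhere on $\{\rho>0\}$. It also forces $\int_{\{\rho=0\}}\phi_M\,ds=0$, which means $\{\rho=0\}$ is null since $\phi_M>0$. Together these give $\rho=\phi_M$ almost everywhere, which completes the uniqueness claim.
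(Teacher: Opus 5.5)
Your proof is correct and takes the same route as the paper: Gibbs' inequality $D(\rho\Vert q)\ge 0$ against the same-mean exponential $q(s)=M^{-1}e^{-s/M}$, with the cross term evaluating to $\ln M+1$. The pointwise $\ln u\le u-1$ argument, which handles a possibly $-\infty$ entropy, and the explicit a.e.\ equality analysis supply rigor that the paper's one-line proof leaves implicit.
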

\begin{proof}
With $q(s)=M^{-1}e^{-s/M}$, nonnegativity of relative entropy gives
$0\le\int\rho\ln(\rho/q)=-H(\rho)+\ln M+\tfrac1M\int s\rho=-H(\rho)+\ln M+1$.
\end{proof}

\begin{proof}[Proof of Proposition~\ref{thm:entropic}]
The rearrangement $f^\ast$ is a density on $[0,\infty)$ with mean $\Delta_{a} x$ by
\eqref{eq:rearr} and entropy $H(f^\ast)=h(x)$ (entropy is rearrangement invariant).
Lemma~\ref{lem:meanentropy} gives $h(x)\le1+\ln\Delta_{a} x$, i.e.\ $\Delta_{a} x\ge e^{h(x)-1}$;
likewise for momentum. Multiplying and inserting BBM,
$\Delta_{a} x\,\Delta_{a} p\ge e^{h(x)+h(p)-2}\ge e^{\ln(\pi e\hb)-2}=\tfrac\pi e\hb$. Equality in the
first step needs $f^\ast,g^\ast$ exponential; equality in BBM needs $f,g$ Gaussian (whose
rearrangement is not exponential); incompatible, so
$c^\ast>\pi/e$.
\end{proof}

\begin{proof}[Proof of Proposition~\ref{prop:hamiltonian}]
For \emph{every} state, \eqref{eq:madineq} gives $\Delta_{a} x\le2\ev{|x|}$ and
$\Delta_{a} p\le2\ev{|p|}$, hence the pointwise bound $\Delta_{a} x\,\Delta_{a} p\le4\ev{|x|}\ev{|p|}$.
Taking the infimum over all states, $c^\ast\hb=\inf_\psi\Delta_{a} x\,\Delta_{a} p\le
\inf_\psi4\ev{|x|}\ev{|p|}$, so the upper bound needs no unimodality assumption. To
evaluate the right-hand side, the dilation $\psi_\lambda(x)=\lambda^{1/2}\psi(\lambda x)$
sends $\ev{|x|}\mapsto\lambda^{-1}\ev{|x|}$ and $\ev{|p|}\mapsto\lambda\ev{|p|}$, so
\begin{equation*}
  \inf_\lambda\ev{\hat O}_\lambda
  =\inf_\lambda\!\Big(\frac{\ev{|x|}}{\lambda\ell}+\frac{\lambda\ell\,\ev{|p|}}{\hb}\Big)
  =2\sqrt{\frac{\ev{|x|}\,\ev{|p|}}{\hb}},
\end{equation*}
independent of $\ell$; minimizing over $\psi$ as well,
$E_0=\inf_\psi 2\sqrt{\ev{|x|}\ev{|p|}/\hb}$, i.e.\ $\inf_\psi4\ev{|x|}\ev{|p|}=E_0^2\hb$.
Therefore $c^\ast\le E_0^2$. The bound is saturated by the minimizer $\psi_0$, which is
even, nodeless, and---because $\hat O$ commutes with $\Fou$---self-dual
($\tilde\psi_0=\psi_0$); when its densities are symmetric unimodal, \eqref{eq:madineq}
holds with equality and $\Delta_{a} x\,\Delta_{a} p(\psi_0)=4\ev{|x|}\ev{|p|}=E_0^2\hb$,
matching the bound.
\end{proof}

\section{Numerical methods}\label{app:num}

All computations use $\hb=1$ on a uniform grid of $N$ points spanning $[-L/2,L/2]$, with
the momentum grid induced by the FFT ($\Delta p=2\pi/(N\Delta x)$) and
$\tilde\psi(p)=(2\pi)^{-1/2}\!\int\psi(x)e^{-ipx}\dd x$. For a discretized density the
decreasing rearrangement is the sorted (descending) array; if $f_{(1)}\ge f_{(2)}\ge\cdots$,
then $\Delta_{a} x=\tfrac{\Delta x^2}{2}\sum_k(2k-1)f_{(k)}$, reproducing the Gaussian values to
$<0.05\%$ at $N=2^{17}$ and matching $2\ev{|x|}$ and $\int_0^1\dx\dd\theta$. The operator
$|x|+|p|$ is represented in position space as
$\mathrm{diag}(|x|)+\Fou^{-1}\mathrm{diag}(|p|)\Fou$, symmetrized and diagonalized; the
lowest eigenvalue converges from below to $E_0\approx1.1032$ as $N{:}1024\to1800$,
$L{:}60\to80$, with $\ev{|x|}=\ev{|p|}\approx 0.552$ and product $\approx1.220\to E_0^2 \approx 1.217$.
Two optimizations---over free symmetric non-increasing profiles, and over the first $8$
even Hermite coefficients (permitting non-unimodal states)---return best products
$\approx1.22$ at unimodal states close to $\psi_0$, none below $E_0^2$.

\end{document}